\theoremstyle{plain}
\newtheorem{theorem}{Proposition}
\theoremstyle{definition}
\newtheorem{definition}{Definition}
\begin{document}
	
	\title{Cooperative light scattering in any dimension}
	
	\author{Tyler Hill}
	\affiliation{%
		Physics Department, University of Michigan, Ann Arbor}
	\author{Barry C. Sanders}
	\affiliation{%
		Institute for Quantum Science and Technology, University of Calgary, Alberta, Canada T2N 1N4
	}
	\affiliation{%
		Program in Quantum Information Science,
		Canadian Institute for Advanced Research,
		Toronto, Ontario M5G 1Z8, Canada
	}
	\affiliation{Hefei National Laboratory for Physical Sciences at the Microscale,
		University of Science and Technology of China, Hefei, Anhui, China}
	\affiliation{Shanghai Branch, CAS Center for Excellence and Synergetic Innovation Center in Quantum Information and Quantum Physics, University of Science and Technology of China, Shanghai, China}
	\author{Hui Deng}
	\affiliation{%
		Physics Department, University of Michigan, Ann Arbor}
	
	
	\begin{abstract}
	We present a theory of cooperative light scattering valid in any dimension: connecting theories for an open line, open plane, and open space in the non-relativistic regime. This theory includes near-field and dipole-orientation effects, highlighting how field mode confinement controls the phenomena. We present a novel experimental implementation for planar collective effects.
	\end{abstract}
	
	\pacs{42.50.Nn,  42.50.Pq, 71.70.Gm, 84.40.Az}
	
	\maketitle
	
	Interatomic dipole-dipole coupling yields remarkable collective effects such as super- and sub-radiant emission~\cite{Dic54,FS90,DB96,Bel2014}, Anderson localization~\cite{Ski2014, Maximo2015}, and collective Lamb shifts~\cite{Lehmberg1970}, 
	which test fundamentals of quantum electrodynamics (QED)
	and have applications to superradiant lasers~\cite{BCW+12}, quantum simulation~\cite{GCCK15}, and protecting quantum information~\cite{Lidar2003}. Waveguide quantum electrodynamics enables improved spatial mode matching compared to three-dimensional ($3$D) systems~\cite{Meir2014}, thereby increasing photon-mediated coupling between distant atoms in one-dimensional ($1$D)~\cite{Kuz1983,Gonzalez-Tudela2011,ZB13,Lalumiere2013,VanLoo2013,Sipahigil2016} and two-dimensional ($2$D) systems~\cite{Maximo2015,GCCK15}. We present an elegant unified model for cooperative light scattering by $N$ two-level atoms in an open spatial region of arbitrary dimension $d$, providing a single expression for the collective effects in terms of ``cardinal'' Bessel functions. We propose a scheme to observe the phenomena in 2D using vacancy centers in diamond.
	
	We develop a theory of multi-atom superradiance for electromagnetic fields confined to $d$D ($d\in[1,2,3]$). We solve the collective Lamb shifts and spontaneous emission rates as a function of dimension $d\in[1,2,3]$, dipole orientation, and dipole-dipole separation. We find that orientation effects are especially prominent at small atom-atom separations as dimension increases. Our theory provides intuition into how superradiance can be controlled via field confinement, orientation, and placement of dipoles in realistic structures such as our proposed diamond vacancy center scheme.
	
	In our theory we find that 2D has the most complex orientation dependence between dipoles with subwavelength separations. This complex dependence is due to the lack of cylindrical symmetry with respect to the separation between dipoles, different from both 3D and 1D. Vacancy centers in diamond allow for subwavelength positioning of centers~\cite{Toyli2010,Schukraft2016,Sipahigil2016,Rogers2014} where the orientation-effects are especially prominent.
	
	Our physical system comprises identical two-level systems (here called ``atoms'') coupled to electromagnetic fields propagating in vacuum.
	\begin{figure}
		\includegraphics[width=0.94\columnwidth]{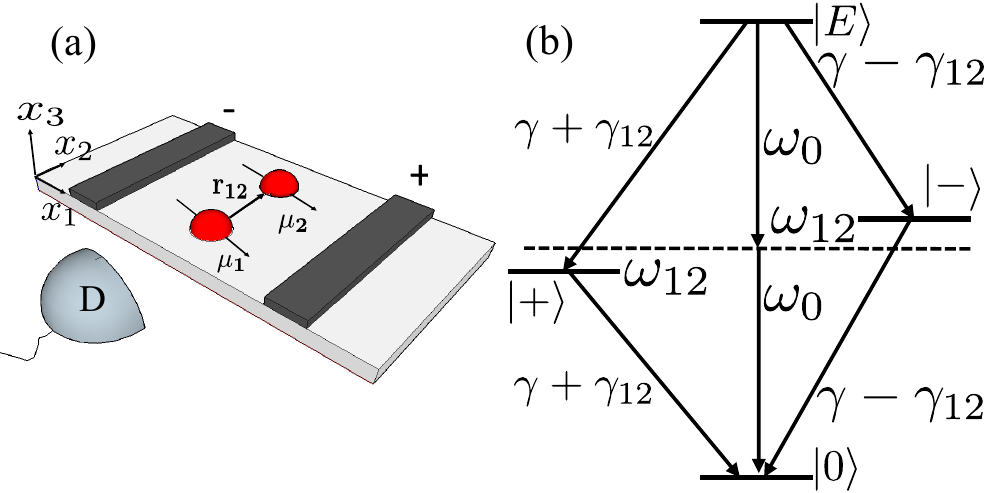}
		\caption{%
			(a) Schematic showing a pair of emitters embedded in a 2D slab
			extending in the $x_1 x_2$ plane.
			The emitters are separated a distance $\bm{r}_{\imath\jmath}$ apart in the $\hat{\bm{x}}_2$ direction.
			Emission is detected by a detector~D. 
			(b) Energy diagram for $2$-atom superradiance, with $|0\rangle=|g\rangle_1|g\rangle_2$, $|E\rangle=|e\rangle_1|e\rangle_2$, and the superradiant and subradiant states $|\pm\rangle=\frac{1}{2}(|e\rangle_1|g\rangle_2\pm|g\rangle_1|e\rangle_2)$. $|\pm\rangle$ have transition energies $\omega_0\mp\omega_{12}$ and rates $\gamma\pm\gamma_{12}$, as labeled in diagram.}
		\label{fig:schematic}
	\end{figure}
	For a $d$D system, the fields are described by a plane-wave decomposition with wavevector $\bm{k}\in\mathbb{R}^d$ and dispersion $\omega_{\bm{k}} =c|\bm {k}|$. In this work a vector $\bm{a}=\sum_{l=1}^{3}x_l\hat{\bm{x}}_l\in\mathbb{R}^d$ if $\bm{a}\cdot\mathds{1}_d=\bm{a}=\sum_{l=1}^{d}x_l\hat{\bm{x}}_l$, where $\mathds{1}_d$ is the $d$D unit dyad
	$\sum_{l=1}^d\hat{\bm{x}}_l\hat{\bm{x}}_l$,
	which projects vectors into $d$D
	for~$\{\hat{\bm{x}}_l\}$ the orthogonal Cartesian unit vectors.
	
	We solve a master equation describing the evolution of atom states in our system, so following Lehmberg~\cite{Lehmberg1970} we quantize the electromagnetic field. We consider the field quantized in a volume $V$, with photon creation operator~$\hat{a}^{\dagger}_{\bm {k} l}$ producing a photon with wavevector $\bm {k}$, frequency $\omega_{\bm{k}}$, and polarization $\hat{\bm{e}}_l$, $\hat{\bm {k}}\cdot\hat{\bm{e}}_l=0$. We can write the fields as
	\begin{equation}
	\begin{Bmatrix}
	\hat{\bm E}(\bm{r})\\ \hat{\bm B}(\bm{r)}
	\end{Bmatrix}
	=\sum_{\bm k}\sum^{2}_{l=1}
	\sqrt{\frac{2\pi\omega_{\bm {k}}}{V}}
	\begin{Bmatrix}
	\hat{\bm e}_l\\ \hat{\bm k}\times\hat{\bm e}_l
	\end{Bmatrix}
	\left(\text{e}^{\text{i}\bm{k}\cdot\bm{r}}\hat{a}_{\bm {k}l}+\operatorname{hc}\right)\label{eq:quantize}
	\end{equation}
	at point~$\bm r$
	with~hc denoting the hermitian conjugate and~$\hat{}$ denoting operator or unit vector
	(which case pertains is discernible from the context).
	
	Identical atoms are placed at positions $\bm{r}\in\mathbb{R}^d$. We label atoms with indices $\imath$ and $\jmath$
	so that for atom $\imath$ energy~$\hbar\omega_0$ separates its
	excited state~$\ket{\text e}_\imath$ from ground state~$\ket{\text g}_\imath$,
	and the atomic dipole moment~$\bm\mu_\imath$ can be oriented in any direction in~$\mathbb{R}^3$.
	Henceforth $\hbar\equiv1$.
	De-exciting and exciting the atom is achieved by operators
	$\hat{\sigma}_\imath=\ket{\text g}_\imath\!\bra{\text e}$ and $\hat{\sigma}^\dagger_\imath$, respectively.
	\begin{theorem}
		The vacuum expectation of any self-adjoint $N$-atom operator~$\hat Q$ for times $\omega_0 t\gg1$ is
		\begin{align}
		\dot{\hat{Q}}
		=&\sum_{\imath\jmath}^N\text{i}\omega_{\imath\jmath}
		\left[\hat{\sigma}_\imath^{\dagger}\hat{\sigma}_{\jmath},\hat{Q}\right]
		\nonumber\\
		&+\frac{\gamma_{\imath\jmath}}{2}\left(2\hat{\sigma}_\imath^{\dagger}
		\hat{Q}\hat{\sigma}_{\jmath}-\hat{\sigma}_\imath^{\dagger}\hat{\sigma}_{\jmath}\hat{Q}
		-\hat{Q}\hat{\sigma}_\imath^{\dagger}\hat{\sigma}_{\jmath}\right)\label{eq:Q2}
		\end{align}
		for $\omega_{\imath\imath}:=\omega_0$, and
		\begin{align}
		\omega_{\imath\jmath}
		=&-\frac{2\pi }{\text{c}^d}\int \slashed{\text{d}}^{d-1}\Omega_{\hat{\bm{k}}}\bm{\mu}_\imath
		\cdot\left[\mathds{1}_{3}-\hat{\bm k}\hat{\bm k}\right]\cdot\bm{\mu}_{\jmath}
		\nonumber\\&\times
		\sum_{\pm}\mathcal{P} \int_0^{\infty} \slashed{\text{d}}{\omega}
		\frac{\omega^d}{\omega\pm\omega_0} \text{e}^{\text{i} \omega \hat{\bm k}\cdot \bm{r}_{\imath\jmath}/\text{c}},
		\label{eq:omegaA}\\
		\gamma_{\imath\jmath}
		=&\frac{2\pi  \omega_0^d}{c^d} \int\slashed{\text{d}}^{d-1} \Omega_{\hat{\bm{k}}}\bm{\mu}_\imath
		\cdot \left[\mathds{1}_{3}-\hat{\bm k}\hat{\bm k}\right]
		\cdot \bm{\mu}_{\jmath} \text{e}^{\text{i}\omega_0
			\hat{\bm k}\cdot\bm{r}_{\imath\jmath}/\text{c}},
		\label{eq:gammaij}
		\end{align}
		with~$\mathcal{P}$ denoting principle value,
		$\bm{r}_{\imath\jmath}:=\bm{r}_\imath-\bm{r}_{\jmath}$,
		$\slashed{\text{d}}^d:=\text{d}^d/(2\pi)^d$,
		$\text{d}^{d-1}\Omega_{\hat{\bm{k}}}$ the $d$D solid angle integrating over directions $\hat{\bm{k}}$.
	\end{theorem}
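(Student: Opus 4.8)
The plan is to follow the Heisenberg--Langevin elimination of Lehmberg~\cite{Lehmberg1970}, but carry out every momentum integral in general dimension~$d$. First I would take the dipole-coupling Hamiltonian $\hat H=\omega_0\sum_\imath\hat\sigma_\imath^\dagger\hat\sigma_\imath+\sum_{\bm k l}\omega_{\bm k}\hat a_{\bm k l}^\dagger\hat a_{\bm k l}-\sum_\imath\bm\mu_\imath(\hat\sigma_\imath+\hat\sigma_\imath^\dagger)\cdot\hat{\bm E}(\bm r_\imath)$ with $\hat{\bm E}$ as in Eq.~\eqref{eq:quantize}, write the Heisenberg equation for $\hat a_{\bm k l}$, and solve it formally as $\hat a_{\bm k l}(t)=\hat a_{\bm k l}(0)\mathrm e^{-\mathrm i\omega_{\bm k}t}$ plus a retarded source integral over $\hat\sigma_\jmath,\hat\sigma_\jmath^\dagger$. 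The homogeneous term annihilates the field vacuum, so on taking the vacuum expectation (equivalently, normal-ordering the field) only the source term survives; inserting it into the Heisenberg equation $\dot{\hat Q}=\mathrm i[\hat H,\hat Q]$ for an arbitrary self-adjoint $N$-atom operator produces a closed integro-differential equation whose memory kernel is $K_{\imath\jmath}(t-t')=\sum_{\bm k l}\tfrac{2\pi\omega_{\bm k}}{V}(\bm\mu_\imath\cdot\hat{\bm e}_l)(\hat{\bm e}_l\cdot\bm\mu_\jmath)\,\mathrm e^{\mathrm i\bm k\cdot\bm r_{\imath\jmath}}\mathrm e^{-\mathrm i\omega_{\bm k}(t-t')}$ together with its conjugate.

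Next comes the Markov (Weisskopf--Wigner) step, justified for $\omega_0 t\gg 1$: inside the kernel integral replace $\hat\sigma_\jmath(t')\to\mathrm e^{\mathrm i\omega_0(t-t')}\hat\sigma_\jmath(t)$ and likewise $\hat\sigma_\jmath^\dagger(t')\to\mathrm e^{-\mathrm i\omega_0(t-t')}\hat\sigma_\jmath^\dagger(t)$ (the counter-rotating term is the origin of the $\sum_\pm$ below), extend the retarded-time integral to $-\infty$, and apply the Sokhotski--Plemelj identity $\int_0^\infty\mathrm d\tau\,\mathrm e^{\mathrm i(\pm\omega_0-\omega)\tau}=\pi\delta(\omega\mp\omega_0)\pm\mathrm i\,\mathcal P(\pm\omega_0-\omega)^{-1}$; only the rotating term produces an on-shell $\delta$ since $\omega_0,\omega>0$. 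Converting $\sum_{\bm k}\to V(2\pi)^{-d}\int\mathrm d^d k$, factoring $\int\mathrm d^d k=\mathrm c^{-d}\int_0^\infty\omega^{d-1}\mathrm d\omega\int\mathrm d^{d-1}\Omega_{\hat{\bm k}}$ with $|\bm k|=\omega/\mathrm c$, and using transverse completeness $\sum_{l=1}^{2}\hat{\bm e}_l\hat{\bm e}_l=\mathds{1}_3-\hat{\bm k}\hat{\bm k}$, the $\delta$-part collapses the radial integral at $\omega=\omega_0$ to give $\gamma_{\imath\jmath}$ of Eq.~\eqref{eq:gammaij} (the power $\omega_0^d$ being $\omega_0^{d-1}$ from the phase space times one more $\omega_0$ from the $\sqrt{\omega_{\bm k}}$ normalization of Eq.~\eqref{eq:quantize}), while the principal-value part leaves $\mathcal P\int_0^\infty\slashed{\text{d}}\omega\,\omega^d/(\omega\pm\omega_0)$ and, carrying the phase $\mathrm e^{\mathrm i\omega\hat{\bm k}\cdot\bm r_{\imath\jmath}/\mathrm c}$ inherited from $\mathrm e^{\mathrm i\bm k\cdot\bm r_{\imath\jmath}}$, assembles into $\omega_{\imath\jmath}$ of Eq.~\eqref{eq:omegaA}. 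Sorting the result into commutator and double-sided terms then yields the adjoint Lindblad form~\eqref{eq:Q2}.

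The work is bookkeeping rather than a single hard estimate, but two points demand care. First, one must keep both rotating and counter-rotating contributions to the source so that the Lamb-shift integrand is the symmetrized $\sum_\pm\omega^d/(\omega\pm\omega_0)$ rather than a single pole, and then track the overall minus sign and the $2\pi/\mathrm c^d$ prefactor consistently through the Gaussian-unit normalization of Eq.~\eqref{eq:quantize} and the $\slashed{\text{d}}^d=\mathrm d^d/(2\pi)^d$ convention. Second, the radial integral defining $\omega_{\imath\jmath}$ is ultraviolet-divergent for every $d\ge 1$, so it must be read as a renormalized quantity: the $\imath=\jmath$ part is absorbed into the physical $\omega_0$ (which is what makes the convention $\omega_{\imath\imath}:=\omega_0$ consistent), and only the $\imath\neq\jmath$, $\bm r_{\imath\jmath}\neq\bm 0$ part is finite, its convergence owed precisely to the oscillatory factor $\mathrm e^{\mathrm i\omega\hat{\bm k}\cdot\bm r_{\imath\jmath}/\mathrm c}$; I would make this explicit by inserting and then removing a soft cutoff. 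A consistency check --- belonging to the evaluation that follows the statement rather than to the statement itself --- is that doing the $\mathrm d^{d-1}\Omega_{\hat{\bm k}}$ integral of $(\mathds{1}_3-\hat{\bm k}\hat{\bm k})\,\mathrm e^{\mathrm i\omega\hat{\bm k}\cdot\bm r_{\imath\jmath}/\mathrm c}$ reproduces the advertised ``cardinal Bessel'' dependence on $\omega_0|\bm r_{\imath\jmath}|/\mathrm c$ in $d=1,2,3$.
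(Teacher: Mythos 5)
Your proposal is correct and follows essentially the same route as the paper: the Heisenberg--Langevin elimination of the field operators \`a la Lehmberg, the Markov approximation yielding the $\mathcal P$ and $\delta$ pieces (the paper's $f_\pm$), the conversion $\frac{1}{V}\sum_{\bm k}\to \mathrm c^{-d}\int\slashed{\text{d}}\omega\,\omega^{d-1}\int\slashed{\text{d}}^{d-1}\Omega_{\hat{\bm k}}$, and the transverse completeness relation $\sum_l(\hat{\bm e}_l\cdot\bm\mu_\imath)(\hat{\bm e}_l\cdot\bm\mu_\jmath)=\bm\mu_\imath\cdot(\mathds{1}_3-\hat{\bm k}\hat{\bm k})\cdot\bm\mu_\jmath$. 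Your additional remarks on the ultraviolet divergence of $\omega_{\imath\jmath}$ and its absorption into a renormalized $\omega_0$ match the discussion the paper gives immediately after the proposition.
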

	\begin{proof}
		The Hamiltonian for~$N$ identical atoms (with individual frequency~$\omega_0$) coupled to the field is
		\begin{align}
		\hat{H}
		=&\sum_{\imath=1}^N\omega_0\hat{\sigma}_\imath^\dagger\hat{\sigma}_\imath
		+\sum_{\bm {k} l}\omega_{\bm {k}}\hat{a}_{\bm {k} l}^\dagger\hat{a}_{\bm {k} l}					
		-\sum_{\imath=1}^N\sum_{\bm {k} l}
		\left(\frac{2\pi\omega_{\bm {k}}}{V}\right)^{1/2}
		\nonumber\\&
		\times\hat{\bm{e}}_{l}\cdot\bm{\mu}_\imath
		\left(\text{e}^{\text{i} \bm{k}\cdot\bm{r}_\imath}\hat{a}_{\bm {k} l}+\text{hc}\right)
		\left(\hat{\sigma}_\imath+\hat{\sigma}_\imath^\dagger\right).
		\label{eq_H}
		\end{align}
		The quantum master equation for~$\hat Q$ any $N$-atom operator was originally solved for 3D fields by treating atoms as point dipoles and neglecting strong fields and non-local effects~\cite{Lehmberg1970},
		and recently the master equation was solved for 1D fields~\cite{Lalumiere2013}.
		Here we employ the Markovian approximation and solve for it in $d$D
		with $d\in[1,2,3]$ when the time of flight across the sample is faster than any spontaneous emission rate so that non-local effects may be neglected.
		
		We first eliminate the photon operators $\hat{a}_{\bm{k} l}(0)$ which represents the field amplitude of the excitation source.
		We rewrite it in terms of atomic operators using
		\begin{align}
		\hat{a}_{\bm{k} l}(t)=&\hat{a}_{\bm{k} l}(0)\text{e}^{-\text{i}\omega_{\bm{k} }t}+i\sum_\imath\left(\frac{2\pi\omega_{\bm{k} }}{V}\right)^{1/2}\hat{e}_{l}\cdot\bm{\mu}_\imath\text{e}^{-\text{i}\bm{k}\cdot\bm{r}_\imath}
		\nonumber\\
		&\times\int_{0}^{t}\text{d}t'\left[\hat{\sigma}_\imath(t')+\hat{\sigma}_\imath^{\dagger}(t')\right]\text{e}^{-\text{i}\omega_{\bm{k} l}(t-t')}.\label{eq:photon}
		\end{align}
		We then take vacuum expectation values of the master-equation solution to obtain
		\begin{align}
		\label{eq:Q1}
		\dot{\hat{Q}}
		=&\text{i}\omega_0\sum_\imath
		\left[\hat{\sigma}_\imath^\dagger\hat{\sigma}_\imath,\hat{Q}\right]
		+\frac{1}{V}\sum_{\imath\jmath}
		\left[\sigma_\imath+\hat{\sigma}_\imath^\dagger, \hat{Q}\right]
		\nonumber\\&\times
		\Bigg\{\sum_{\bm {k} l} 2\pi\omega_{\bm{k}}(\hat{\bm e}_{l}\cdot\bm{\mu}_\imath)
		\left(\hat{\bm e}_{l}\cdot\bm{\mu}_\jmath\right)
		\text{e}^{\text{i}\bm {k}\cdot\bm{r}_{\imath\jmath}}
		\nonumber\\&\times
		\left[f_-\hat{\sigma}_\jmath+f_+\hat{\sigma}^{\dagger}_\jmath\right]+\text{hc}\Bigg\}
		\end{align}
		with $f_{\pm}= -i\mathcal{P}(\omega\pm\omega_0)^{-1}+\pi\delta(\omega\pm\omega_0)$.
		
		We then express the master equation in terms of collective frequency shifts and corresponding linewidths,
		which involves converting the sum over ${\bm k}$ into integration over $\omega({\bm k})$
		using the dispersion relation $\omega =\text{c}\left|{\bm k}\right|$ and obtain
		\begin{align}
		\frac{1}{V}\sum_{\bm{k}}
		\to& \int\slashed{\text{d}}^d\bm{k}
		\to\frac{1}{\text{c}^d}\int\slashed{\text{d}}\omega\omega^{d-1}\int \slashed{\text{d}}^{d-1}\Omega_{\hat{\bm{k}}},
		\label{eq:modes}
		\\
		\text{d}^{d-1}\Omega_{\hat{\bm{k}}}=&\prod_{l=1}^{d-1}\sin^{d-l-1}\theta_{l}\text{d}\theta_{l}.
		\end{align}
		Here $\text{d}^{d-1}\Omega_{\hat{\bm{k}}}$ is the $d$D solid angle over directions $\hat{\bm{k}}$
		with azimuthal angles $\theta_1,\dots,\theta_{d-2}\in[0,\pi]$ and polar angle $\theta_{d-1}\in[0,2\pi)$.
		Substituting
		\begin{equation}
		\sum_{l=1}^2(\hat{\bm e}_{l}\cdot\bm{\mu}_\imath)
		\left(\hat{\bm e}_{l}\cdot\bm{\mu}_{\jmath}\right)
		=\bm{\mu}_\imath\cdot(\mathds{1}_{3}-\hat{\bm k}\hat{\bm k})\cdot\bm{\mu}_{\jmath},
		\label{eq:kdotmu}
		\end{equation}
		and Eq.~(\ref{eq:modes})
		into Eq.~(\ref{eq:Q1}) completes the proof.
	\end{proof}
	For $N=1$ atom and a $d$D field,
	with~$k_0:=\omega_0/\text{c}=2\pi/\lambda_0$
	and
	$\bm\mu_\imath:=\mu_\imath\hat{\bm{\mu}}_\imath$,
	Eq.~(\ref{eq:gammaij})
	yields spontaneous emission rate
	\begin{align}
	\gamma_{\imath\imath}
	=&\frac{2^{3-d}\pi^{2-d/2}\mu_\imath^2k_0^d}{\Gamma(d/2)}
	\left(1-\frac{\hat{\bm{\mu}}_\imath
		\cdot\mathds1_d\cdot\hat{\bm{\mu}}_\imath}{d}\right)
	\end{align}
	for $\Gamma$ the Gamma function.
	In 3D,
	$\gamma_{\imath\imath}
	=4\mu_\imath^2 k_0^3/3$ is independent of dipole orientation.
	In~1D and 2D,
	$\gamma_{\imath\imath}$ is maximized for the dipole perpendicular to the $\mathbb{R}^d$ subspace
	($\hat{\bm{\mu}}_\imath\cdot\mathds1_d\cdot\hat{\bm{\mu}}_\imath=0$)
	and thus falls by half for in-plane dipoles in 2D
	($\hat{\bm{\mu}}_\imath\cdot\mathds1_2\cdot\hat{\bm{\mu}}_\imath=1$)
	compared to out-of-plane dipoles~\cite{Maximo2015}
	and is zero for in-line dipoles in 1D.
	
	For $r_{\imath\jmath}\ll\lambda$, Eq.~(\ref{eq:omegaA}) is divergent and cannot be used to calculate the single-atom Lamb shift.
	The breakdown of this theory to describe the single-atom Lamb shift is a consequence of approximating a physical dipole with a point dipole.
	We thus treat the single-atom Lamb shift as being incorporated into a renormalized frequency~$\omega_0$.
	
	For $N \ge 2$ atoms,
	signatures of collective-effects, such as enhanced spontaneous decay and Lamb shifts,
	are quantified by~$\gamma_{\imath\jmath}$ and~$\omega_{\imath\jmath}$
	($\imath\neq\jmath$), respectively,
	as illustrated in Fig.~\ref{fig:schematic}(b)
	for $N=2$ atoms.
	We now express~$\gamma_{\imath\jmath}$ and~$\omega_{\imath\jmath}$
	in terms of the~$d$D dyadic Green's function.
	\begin{definition}
		The dyadic Green's function in $d$D is
		$\overleftrightarrow{G}_d:=\mathcal{D}G_d$
		for $\mathcal{D}:=\mathds{1}_{3}+\frac{1}{k_0^2}\nabla_d\nabla_d$
		a dyadic operator,
		$G_d$ the solution of the~$d$D Helmholtz equation $\left[\nabla_d^2+k_0^2\right]G_d\left(\bm{r}_{\imath\jmath},\omega_0\right)=-\delta\left(\bm{r}_{\imath\jmath}\right)$.
	\end{definition}
	\begin{definition}
		Analogous to the relation between $\sin x$ and $\operatorname{sinc}x$ (``cardinal sine''),
		we introduce ``cardinal'' versions of the Bessel functions (first and second kind)
		and Hankel function of the first kind as, respectively,
		\begin{equation*}
		\check{J}_\alpha(x):=\frac{J_\alpha(x)}{x^\alpha},\;
		\check{Y}_\alpha(x):=\frac{Y_\alpha(x)}{x^\alpha},\;
		\check{H}^{(1)}_\alpha(x):=\frac{H^{(1)}_\alpha(x)}{x^\alpha}.
		\end{equation*}
	\end{definition}
	\begin{theorem}
		The complex collective frequency shift is
		\begin{equation}
		\label{eq:Gamma}
		\Gamma_{\imath\jmath}
		:=-\omega_{\imath\jmath}+\text{i}\gamma_{\imath\jmath}/2
		=4\pi  k_0^2\bm{\mu}_\imath
		\cdot\overleftrightarrow{G}_d(\bm{r}_{\imath\jmath},\omega_0)
		\cdot \bm{\mu}_{\jmath}.
		\end{equation}
	\end{theorem}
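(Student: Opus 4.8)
The plan is to fuse the two coefficients of Theorem~1 into $\Gamma_{\imath\jmath}=-\omega_{\imath\jmath}+\text{i}\gamma_{\imath\jmath}/2$, reassemble a single $d$D momentum integral, recognise it as the Fourier representation of the outgoing dyadic Green's function, and discard a residual contact term that is precisely the point-dipole self-energy already absorbed into the renormalised~$\omega_0$.

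First I would write the factor $\omega_0^d\,\text{e}^{\text{i}\omega_0\hat{\bm k}\cdot\bm r_{\imath\jmath}/\text{c}}$ in~$\gamma_{\imath\jmath}$ as $2\pi\int_0^\infty\slashed{\text{d}}\omega\,\omega^d\sum_{\pm}\delta(\omega\pm\omega_0)\,\text{e}^{\text{i}\omega\hat{\bm k}\cdot\bm r_{\imath\jmath}/\text{c}}$, which is legitimate since $\delta(\omega+\omega_0)$ has no support on $(0,\infty)$. Adding $-\omega_{\imath\jmath}$ from Eq.~(\ref{eq:omegaA}) and invoking Sokhotski--Plemelj, $\mathcal{P}(\omega\pm\omega_0)^{-1}+\text{i}\pi\delta(\omega\pm\omega_0)=(\omega\pm\omega_0-\text{i}0^+)^{-1}$, collapses both collective coefficients into one frequency integral carrying a single, retarded prescription.

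Next I would undo the mode-counting of Eq.~(\ref{eq:modes}): writing $\omega=\text{c}|\bm k|$ and peeling one power of $\omega$ off the $\omega^{d}$ numerator to supply the $\omega^{d-1}$ Jacobian, the $\pm$ sum becomes $\sum_{\pm}\omega(\omega\pm\omega_0-\text{i}0^+)^{-1}\to 2|\bm k|^{2}(|\bm k|^{2}-k_0^{2}-\text{i}0^+)^{-1}$ for $|\bm k|>0$, so that
\begin{equation*}
\Gamma_{\imath\jmath}=4\pi\int\slashed{\text{d}}^{d}\bm k\;
\frac{|\bm k|^{2}\,\bm\mu_\imath\cdot(\mathds{1}_3-\hat{\bm k}\hat{\bm k})\cdot\bm\mu_{\jmath}}
{|\bm k|^{2}-k_0^{2}-\text{i}0^+}\,\text{e}^{\text{i}\bm k\cdot\bm r_{\imath\jmath}}.
\end{equation*}
Then I would use that the outgoing Helmholtz Green's function has representation $G_d(\bm r_{\imath\jmath},\omega_0)=\int\slashed{\text{d}}^{d}\bm k\,(|\bm k|^{2}-k_0^{2}-\text{i}0^+)^{-1}\text{e}^{\text{i}\bm k\cdot\bm r_{\imath\jmath}}$, so that, since $\nabla_d\mapsto\text{i}\bm k$ on plane waves with $\bm k\in\mathbb{R}^d$, $\mathcal{D}\mapsto\mathds{1}_3-\bm k\bm k/k_0^{2}$ and hence $\overleftrightarrow{G}_d(\bm r_{\imath\jmath},\omega_0)=\int\slashed{\text{d}}^{d}\bm k\,(\mathds{1}_3-\bm k\bm k/k_0^{2})(|\bm k|^{2}-k_0^{2}-\text{i}0^+)^{-1}\text{e}^{\text{i}\bm k\cdot\bm r_{\imath\jmath}}$. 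With $|\bm k|^{2}\hat{\bm k}\hat{\bm k}=\bm k\bm k$ the numerator splits as $|\bm k|^{2}(\mathds{1}_3-\hat{\bm k}\hat{\bm k})=k_0^{2}(\mathds{1}_3-\bm k\bm k/k_0^{2})+(|\bm k|^{2}-k_0^{2})\mathds{1}_3$, and dividing by $|\bm k|^{2}-k_0^{2}-\text{i}0^+$ and Fourier transforming gives
\begin{equation*}
\Gamma_{\imath\jmath}=4\pi k_0^{2}\,\bm\mu_\imath\cdot\overleftrightarrow{G}_d(\bm r_{\imath\jmath},\omega_0)\cdot\bm\mu_{\jmath}+4\pi(\bm\mu_\imath\cdot\bm\mu_{\jmath})\,\delta(\bm r_{\imath\jmath}),
\end{equation*}
so for distinct atoms, $\bm r_{\imath\jmath}\neq\bm 0$, the contact term drops and the claim follows.

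I expect the main obstacle to be threading the $\text{i}0^+$ prescription consistently through the $\pm$-fold and the partial-fraction step so that it lands on the \emph{retarded} (outgoing) Green's function of the Definition rather than its advanced partner --- the sign being fixed by the causal, Markovian structure inherited from Eq.~(\ref{eq:photon}) --- together with a short argument that the leftover $\delta(\bm r_{\imath\jmath})$ is exactly the point-dipole self-interaction already absorbed into $\omega_0$; relatedly, in $d=1,2$ one should note that the individual momentum integrals are only conditionally convergent in the infrared even though the combination above and $\overleftrightarrow{G}_d$ at $\bm r_{\imath\jmath}\neq\bm 0$ are finite.
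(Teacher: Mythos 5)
Your proof is correct, but it takes a genuinely different route from the paper's. The paper never touches the momentum-space propagator: it first solves the $d$D Helmholtz equation in position space with the Sommerfeld radiation condition to get $G_d=\tfrac{\mathrm i}{4}[k_0^2/(2\pi)]^{d/2-1}\check{H}^{(1)}_{d/2-1}(\tilde r_{\imath\jmath})$, applies $\mathcal D$ explicitly via the identity $\nabla_d\hat{\bm r}_{\imath\jmath}/k_0=(\mathds 1_d-\hat{\bm r}_{\imath\jmath}\hat{\bm r}_{\imath\jmath})/\tilde r_{\imath\jmath}$ and Hankel recurrences, and then separately reduces $\gamma_{\imath\jmath}$ and $\omega_{\imath\jmath}$ to $\bm\mu_\imath\cdot\mathcal D[\,\cdot\,]\cdot\bm\mu_\jmath$ acting on $(2\pi)^{d/2}\check J_{d/2-1}$ and $(2\pi)^{d/2}\check Y_{d/2-1}$ respectively (using the plane-wave angular integral), so that $-\omega_{\imath\jmath}+\mathrm i\gamma_{\imath\jmath}/2$ assembles into $H^{(1)}=J+\mathrm iY$. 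You instead fuse the two coefficients at the level of the $\bm k$-integral via Sokhotski--Plemelj into a single retarded denominator $(|\bm k|^2-k_0^2-\mathrm i0^+)^{-1}$ and identify the Fourier representation of $\overleftrightarrow{G}_d$, with the leftover $\delta(\bm r_{\imath\jmath})$ contact term dropping for distinct atoms. Your route is more structural: it explains \emph{why} the combination $-\omega_{\imath\jmath}+\mathrm i\gamma_{\imath\jmath}/2$ is exactly the outgoing Green's function (causality fixes the $\mathrm i0^+$), and it cleanly isolates the point-dipole self-interaction that the paper handles by renormalizing $\omega_0$. What it does not deliver is the explicit cardinal-Hankel form (\ref{eq:GdH}), which the paper's proof produces as a byproduct and which is essential for everything downstream (Eq.~(\ref{eq:GammaHTheta}) and the asymptotics); also note that the Definition specifies $G_d$ only as ``the solution'' of the Helmholtz equation, so like the paper you must still supply the outgoing boundary condition --- your $\mathrm i0^+$ prescription does exactly that, consistently with the paper's Sommerfeld condition. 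One small slip in your closing remark: the conditional convergence of the individual momentum integrals (and the origin of the contact term) is an ultraviolet issue at large $|\bm k|$, not an infrared one; the integrands are regular at $\bm k=\bm 0$ for $k_0\neq0$ in all $d\in[1,2,3]$.
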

	\begin{proof}
		Solutions of the $d$D Helmholtz equation are~\cite{Sti77}
		$A\check{J}_{d/2-1}\left(\tilde{r}_{\imath\jmath}\right)
		+B\check{Y}_{d/2-1}\left(\tilde{r}_{\imath\jmath}\right)$
		for~$\tilde{\bm r}_{\imath\jmath}
		:=k_0\bm{r}_{\imath\jmath}
		=\tilde{r}_{\imath\jmath}\hat{\tilde{\bm r}}_{\imath\jmath}$
		and~$A$ and~$B$ arbitrary complex constants.
		Imposing the Sommerfeld radiation condition
		\begin{align}
		\underset{\tilde{r}_{\imath\jmath}\to\infty}{\text{lim}}{\left|\bm{r}_{\imath\jmath}\right|^{(d-1)/2}
			\left(\frac{\partial}{\partial{\tilde r}_{\imath\jmath}}-\text{i}\right)
			G_d\left(\bm{r}_{\imath\jmath},\omega_0\right)=0}.
		\end{align}
		on an outgoing spherical wave satisfying energy conservation yields the purely radial expression
		\begin{equation}
		G_d\left(\bm{r}_{\imath\jmath},\omega_0\right)
		=\text{\ensuremath{\frac{\rm i}{4}}}\left[\frac{k_0^2}{2\pi}\right]^{d/2-1}
		\check{H}_{d/2-1}^{(1)}\left(\tilde{r}_{\imath\jmath}\right).
		\label{eq:DGF2d}
		\end{equation}
		For
		$G'_d:=\frac{\text{d}G_d}{\text{d}\tilde{r}_{\imath\jmath}}$ and $G''_d:=\frac{\text{d}^2G_d}{\text{d}\tilde{r}_{\imath\jmath}^{2}}$,
		applying~$\mathcal D$ to~$G_d$~(\ref{eq:DGF2d}) yields
		\begin{equation}
		\frac{1}{k_0^2}\nabla_d\nabla_dG_d
		=\hat{\bm{r}}_{\imath\jmath}\hat{\bm{r}}_{\imath\jmath}G''_d
		+\frac{\nabla_d\hat{\bm{r}}_{\imath\jmath}}{k_0}G'_d.
		\end{equation}	
		We apply the identity
		\begin{equation}
		\frac{\nabla_d\hat{\bm{r}}_{\imath\jmath}}{k_0}=\frac{1}{\tilde{r}_{\imath\jmath}}(\mathds1_d-\hat{\bm{r}}_{\imath\jmath}\hat{\bm{r}}_{\imath\jmath})
		\end{equation}
		to obtain
		\begin{equation}
		\frac{1}{k_0^2}\nabla_d\nabla_d G_d
		=\frac{1}{\tilde{r}_{\imath\jmath}}G'_d\mathds{1}_d+\left(G''_d
		-\frac{1}{\tilde{r}_{\imath\jmath}}G'_d\right)\hat{\bm{r}}_{\imath\jmath}\hat{\bm{r}}_{\imath\jmath}.\label{eq:dyadop}
		\end{equation}
		Hankel function recurrence relations then yield
		\begin{align}
		\label{eq:GdH}
		\overleftrightarrow{G}_d\left(\tilde{\bm{r}}_{\imath\jmath},\omega_0\right)
		=&\text{\ensuremath{\frac{\rm i}{4}}}\left[\frac{k_0^2}{2\pi}\right]^{d/2-1}
		\bigg(\check{H}^{(1)}_{d/2-1}\left(\tilde{r}_{\imath\jmath}\right)\left[\mathds{1}_{3}-\hat{\bm r}_{\imath\jmath}\hat{\bm r}_{\imath\jmath}\right]
		\nonumber\\&
		-\check{H}^{(1)}_{d/2}\left(\tilde{r}_{\imath\jmath}\right)\left[\mathds{1}_d-d\hat{\bm r}_{\imath\jmath}\hat{\bm r}_{\imath\jmath}\right]\bigg).
		\end{align}
		
		We now obtain~$\Gamma_{\imath\jmath}$ directly from Eqs.~(\ref{eq:omegaA}) and~(\ref{eq:gammaij}).
		Substituting
		\begin{equation}
		-k_0^2\hat{\bm k}\hat{\bm k}\text{e}^{\text{i}\omega\hat{\bm k}\cdot\bm{r}_{\imath\jmath}/c}
		=\nabla_d\nabla_d \text{e}^{\text{i}\hat{\bm k}\cdot\bm{r}_{\imath\jmath}\omega_0/c},\;
		\bm{r}_{\imath\jmath}\neq\bm0,
		\end{equation}
		into Eq.~(\ref{eq:gammaij}),
		and using
		\begin{equation}
		\int\text{d}^{d-1}\Omega_{\hat{\bm{k}}} \text{e}^{\text{i}\hat{\bm k}\cdot\bm{r}_{\imath\jmath}\omega_0/c}
		=\left(2\pi\right)^{d/2}\check{J}_{d/2-1}\left(\tilde{r}_{\imath\jmath}\right),
		\end{equation}
		yields
		\begin{equation}
		\gamma_{\imath\jmath}
		=\frac{k_0^d}{(2\pi)^{d-2}}
		\bm{\mu}_\imath\cdot\mathcal{D}\left[(2\pi)^{d/2}\check{J}_{d/2-1}\left(\tilde{r}_{\imath\jmath}\right)\right]\cdot\bm{\mu}_{\jmath}.
		\label{eq:gammaij2}
		\end{equation}
		Similarly,
		\begin{align}
		\omega_{\imath\jmath}
		=\frac{1}{2}\frac{k_0^d}{(2\pi)^{d-2}}
		\bm{\mu}_\imath\cdot\mathcal{D}\left[(2\pi)^{d/2}\check{Y}_{d/2-1}\left(\tilde{r}_{\imath\jmath}\right)\right]
		\cdot\bm{\mu}_{\jmath}.
		\label{eq:omegaij2}
		\end{align}
		Comparing Eqs.~(\ref{eq:gammaij2}) and~(\ref{eq:omegaij2})
		with~(\ref{eq:GdH}) proves the result.
	\end{proof}
	Equation~(\ref{eq:Gamma}) is a unified solution of collective atom-atom couplings for $d$D, and includes the previous results for $1$D~\cite{Lalumiere2013}, $2$D~\cite{Maximo2015}, and $3$D~\cite{Lehmberg1970}.
	Now we separate the terms governing the separation and orientation dependence of the collective atom-atom coupling by rewriting Eq.~(\ref{eq:Gamma}) as
	\begin{equation}
	\label{eq:GammaHTheta}
	\Gamma_{\imath\jmath}
	=\frac{\rm i}{2}\frac{\mu_\imath\mu_{\jmath} k_0^d}{(2\pi)^{d/2-2}}
	\left(\check{H}_{d/2-1}^{(1)}\left(\tilde{r}_{\imath\jmath}\right)\Theta_{\imath\jmath}
	-\check{H}_{d/2}^{(1)}\left(\tilde{r}_{\imath\jmath}\right)\Theta'_{\imath\jmath}\right)
	\end{equation}
	for
	\begin{align}
	\Theta_{\imath\jmath}
	=&\hat{\bm{\mu}}_\imath\cdot\hat{\bm{\mu}}_{\jmath}
	-(\hat{\bm{\mu}}_\imath\cdot\hat{\bm{r}}_{\imath\jmath})(\hat{\bm{\mu}}_{\jmath}\cdot\hat{\bm{r}}_{\imath\jmath}),\label{eq:thetaff}
	\\
	\Theta'_{\imath\jmath}
	=&\hat{\bm{\mu}}_\imath\cdot\mathds1_d\cdot\hat{\bm{\mu}}_{\jmath}
	-d(\hat{\bm{\mu}}_\imath\cdot\hat{\bm{r}}_{\imath\jmath})
	(\hat{\bm{\mu}}_{\jmath}\cdot\hat{\bm{r}}_{\imath\jmath}).
	\label{eq:thetanf}
	\end{align}
	Here the cardinal Hankel functions express the separation dependence of the collective effects, whereas~(\ref{eq:thetaff}) and~(\ref{eq:thetanf})
	summarize the orientation dependence of these effects.
	Asymptotically $\tilde{r}_{\imath\jmath}\gg 1$,
	\begin{equation}
	\label{eq:asymptote_far}
	\check{H}_{d/2-1}^{(1)}\left(\tilde{r}_{\imath\jmath}\right)
	\to\frac{\exp\left\{\text{i}
		\left[\tilde{r}_{\imath\jmath}-\frac{\pi}{4}(d-1)\right]\right\}}
	{\tilde{r}_{\imath\jmath}^{\frac{d-1}{2}}},
	\end{equation}
	leading to
	$\check{H}_{d/2-1}^{(1)}\left(\tilde{r}_{\imath\jmath}\right)/\check{H}_{d/2}^{(1)}\left(\tilde{r}_{\imath\jmath}\right)
	\to\text{i}\tilde{r}_{\imath\jmath}$,
	which shows that the first term in Eq.~(\ref{eq:GammaHTheta}) dominates for $\tilde{r}_{\imath\jmath}\gg 1$
	(defined here as far field)
	and the second term in Eq.~(\ref{eq:GammaHTheta}) which typically dominates for near field,
	defined as $\tilde{r}_{\imath\jmath}\ll 1$. We see that the near- and far-field terms are $\pi/2$ out of phase, so it is possible to use orientation control to suppress either $\gamma_{\imath\jmath}$ or $\omega_{\imath\jmath}$ by a factor of $\tilde{r}_{\imath\jmath}$ for distant atoms.
	
	Now we examine angular dependence of $\Gamma_{\imath\jmath}$~(\ref{eq:GammaHTheta})
	by studying the properties of
	$\Theta_{\imath\jmath}$~(\ref{eq:thetaff})
	and $\Theta'_{\imath\jmath}$~(\ref{eq:thetanf}).
	We restrict to parallel dipoles
	($\hat{\bm\mu}_\imath=\hat{\bm\mu}_\jmath$)
	separated along the~$x_1$ axis
	($\hat{\bm r}_{\imath\jmath}=\hat{\bm x}_1$)
	to visualize the angular dependence. In the far-field, the angular dependence is governed by the $d$-independent term
	$\Theta_{\imath\jmath}=1-\left(\hat{\bm\mu}_\imath\cdot\hat{\bm x}_1\right)^2$.
	Setting $\hat{\bm\mu}_\imath=\sin\theta_1\cos\theta_2\hat{\bm x}_1+\sin\theta_1\sin\theta_2\hat{\bm x}_2+\cos\theta_1\hat{\bm x}_3$
	yields $\Theta_{\imath\jmath}=1-\sin^2\theta_1\cos^2\theta_2$,
	which is a torus.
	In the near field, $\Gamma$ becomes $d$-dependent with   
	\begin{equation}
	\Theta'_{\imath\jmath}
	=\begin{cases}
	0,&d=1,\\
	-\sin^2\theta_1\cos 2\theta_2,&d=2,\\
	1-3\sin^2\theta_1\cos^2\theta_2,&d=3.
	\end{cases}
	\end{equation}
		
	We plot real and imaginary parts of~$\Gamma_{\imath\jmath}$
	in Fig.~\ref{fig:orientation}
	\begin{figure}	
		\includegraphics[width=0.94\columnwidth]{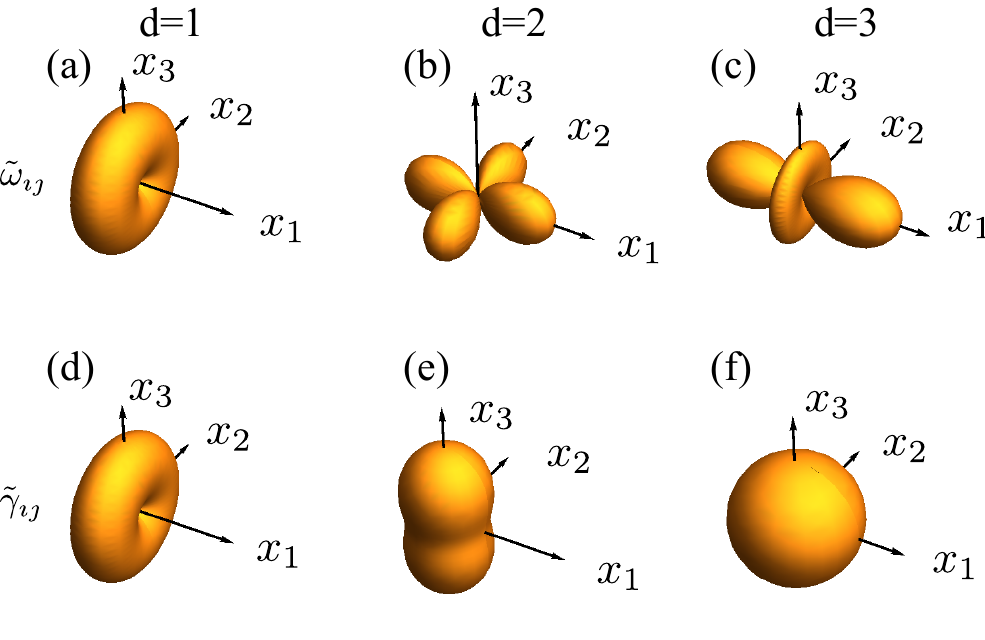}
		\caption{%
			Spherical polar plots of dimensionless ~$\tilde{\omega}_{12}=\omega_{12}/\gamma_{11}$ (a)-(c) and~$\tilde{\gamma}_{12}=\gamma_{12}/\gamma_{11}$ (d)-(f) up to a multiplicative constant for parallel dipoles $\bm{\mu}_1=\bm{\mu}_2=\sum_{l=1}^{3} x_l\hat{\bm{x}}_l$, $r_{12}\ll\lambda$, and $\hat{\bm{r}}_{12}=\hat{\bm{x}}_{1}$.}
		\label{fig:orientation}
	\end{figure}
	for parallel dipoles as functions of dipole orientation $\hat{\bm{\mu}}_{\imath}$ given by $x_1$,$x_2$,$x_3$. 
    The interatomic separation is fixed to be very small
	($\tilde{r}_{\imath\jmath}\ll 1$)
	in order to correspond to the Dicke limit.
	The cylindrical symmetry of~$\omega_{\imath\jmath}$ for the 1D and 3D cases,
	as seen in Fig.~\ref{fig:orientation}(a,c),
	is replaced the four-leaf structure in 2D shown in Fig.~\ref{fig:orientation}(b),
	and the simple plot of~$\gamma_{\imath\jmath}$ in Fig.~\ref{fig:orientation}(f)
	transforms to more complicated surfaces in Fig.~\ref{fig:orientation}(d,e)
	due to enhanced emission for atoms oriented perpendicular to its confinement.
	
	Collective effects~(\ref{eq:GammaHTheta})
	are strongly dependent on dimensional confinement,
	as evidenced by the contrast between inverse-distance dependence in 3D vs constant in 1D
	for large separation $\tilde{r}_{\imath\jmath}\gg1$~\cite{Lehmberg1970,Lalumiere2013}.
	The $d$-dependence of~$\Gamma_{\imath\jmath}$
	is captured by the asymptotic expression for the cardinal Hankel function~(\ref{eq:asymptote_far})
	whose denominator shows $d$-dependent fall-off
	and whose oscillatory exponential numerator
	shows that~$\gamma_{\imath\jmath}$ and~$\omega_{\imath\jmath}$
	are $\pi/2$ out of phase.
	Furthermore $\check{H}_{d/2-1}^{(1)}$ experiences a~$\pi/4$ phase shift for each integer leap in dimension~$d$, corresponding to a $\lambda_0/8$ shift in relative positions of the atoms in different dimensions for maximizing atom-field coupling.
	
	Whereas~$\omega_{\imath\jmath}$
	and~$\gamma_{\imath\jmath}$
	display similar features for well separated parallel dipoles,
	the closely spaced parallel-dipole case
	is quite different
	due to $\gamma_{\imath\jmath}$ being sensitive to both near- and far-field terms in (\ref{eq:GammaHTheta}) while $\omega_{\imath\jmath}$ is only sensitive to near field terms.	Specifically,
	the asymptotic expressions for the cardinal Bessel functions yield
	$\gamma_{\imath\jmath}\mapsto\gamma_{\imath\imath}
	\left[1-\mathcal{O}(\tilde{r}_{\imath\jmath}^{2})\right]$,
	which is independent of~$d$,
	whereas
	\begin{equation}
	\omega_{\imath\jmath}
	\sim\begin{cases}
	\tilde{r}_{\imath\jmath}^{-d},&\Theta'_{\imath\jmath}\neq0,\\
	\tilde{r}_{\imath\jmath}^{-d+2},&\Theta'_{\imath\jmath}=0,d\neq2,\\
	\log\tilde{r}_{\imath\jmath},&\Theta'_{\imath\jmath}=0,d=2.
	\end{cases}
	\end{equation}
	We now have asymptotic expressions of~$\gamma_{\imath\jmath}$
	and~$\omega_{\imath\jmath}$ in the asymptotic small and large~$\tilde{r}_{\imath\jmath}$
	regimes and now explore the dependence on the full range of~$\tilde{r}_{\imath\jmath}$.
	
	We plot each of~$\omega_{\imath\jmath}$ and~$\gamma_{\imath\jmath}$
	as a function of both~$\tilde{r}_{\imath\jmath}$ and~$d$
	as surface plots in Fig.~\ref{fig:normalizedGamma}(a,c)
	and present slices of those plots in Fig.~\ref{fig:normalizedGamma}(b,d).
	\begin{figure}	
		\includegraphics[width=0.94\columnwidth]{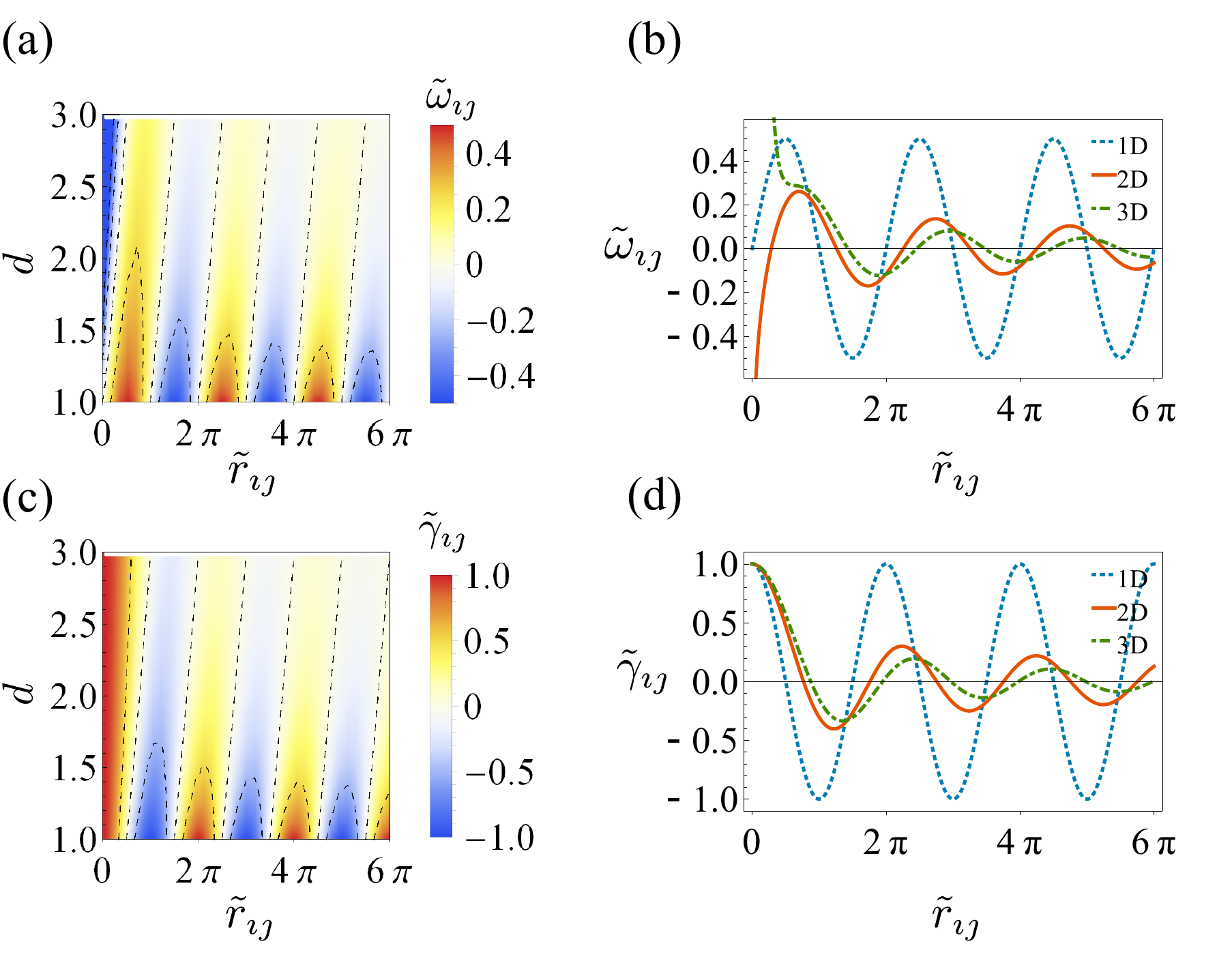}
		\caption{%
			(Color online) Dimensional and separation dependence of dimensionless $\tilde{\omega}_{\imath\jmath}:=\omega_{\imath\jmath}/\gamma_{\imath\imath}$ ((a)-(b)) and $\tilde{\gamma}_{\imath\jmath}:=\gamma_{\imath\jmath}/\gamma_{\imath\imath}$ ((c)-(d))
			vs dimensionless separation~$\tilde{r}_{\imath\jmath}=2\pi\frac{r_{\imath\jmath}}{\lambda}$ 
for identical parallel dipoles $\hat{\bm{\mu}}_\imath=\hat{\bm{\mu}}_{\jmath}=\hat{\bm{x}}_3$.
(a) and (c) show results interpolated for real valued dimensions $1\leq d\leq3$. (b) and (d) compare $d=1$ (dotted blue line), $d=2$ (solid red line), $d=3$ (dot-dashed green line).%
		}
		\label{fig:normalizedGamma}
	\end{figure}
	We have interpolated between integer dimensions by inserting the
	modified identity
	\begin{equation}
	\mathds{1}_d
	=\sum_{l=1}^{\lceil d \rceil }\hat{\bm{x}}_l\hat{\bm{x}}_l
	+\left(d-\lceil d \rceil\right)\hat{\bm{x}}_{\lceil d \rceil}
	\hat{\bm{x}}_{\lceil d \rceil}
	\end{equation}
	into Eq.~(\ref{eq:thetanf}),
	where $\lceil~\rceil$ is the ceiling function.
	The small and large~$\tilde{r}_{\imath\jmath}$ features have been explained already,
	and the plot shows that these small and large limits apply everywhere except a small region near $\tilde{r}_{\imath\jmath}\sim1$.
	Interestingly our $d$-dependent functions are smooth for real-valued~$d$,
	thus giving us clear predictions of collective behavior for non-integer dimension.
	Exploration of non-integer~$d$ collective effects would be quite interesting
	and could relate to electromagnetic field anisotropy~\cite{He1991}.

	As $1$D and $3$D collective effects have been explored experimentally,
	we propose a $2$D experiment with vacancy centers in diamond as our ``atoms''. In addition to requiring a structure that confines the electromagnetic field to $2$D, we have three requirements for the emitters for realizing $2$D superradiance: sub-wavelength relative position control, lifetime-limited linewidths, and spectrally overlapping energies. The $2$D structure and emitter-position control ensure the ability to control superradiance phenomena, while the spectral requirements are necessary for their observation.
	
	 There are two promising approaches towards a $2$D diamond structure: ultra-high aspect ratio diamond thinned via plasma etching~\cite{Tao2013} and membrane structures of sub-wavelength thicknesses~\cite{Piracha2016}. As the diamond medium is not the vacuum described thus far, we extend our result to dielectric media using~\cite{Knoester1989,Barnett1992}
	\begin{align}	\Gamma_{\imath\jmath,\epsilon(\omega)}(r_{\imath\jmath})=\text{Re}\left[\epsilon(\omega_0)^{1/2}\right]|l|^{2}\Gamma_{\imath\jmath}\left(\text{Re}\left[\epsilon(\omega_0)^{1/2}\right]r_{\imath\jmath}\right),
	\end{align}
	where $\epsilon(\omega)$ is the dielectric coefficient, and $l$ is a local electric field factor.
	
	To satisfy the requirements on the emitters, ion implantation techniques allow either nitrogen or silicon vacancies to be positioned with impressive $r_{\imath\jmath}\sim\lambda_0/20$ accuracy~\cite{Toyli2010,Schukraft2016,Sipahigil2016,Rogers2014}. We propose working with a single pair of vacancies as shown in Fig.~\ref{fig:schematic}(a) to minimize inhomogeneity inherent in an ensemble. Nitrogen vacancy centers are appealing due to their narrow homogeneous linewidths~\cite{Santori2010} but suffer from strain-induced inhomogeneous broadening that can be ameliorated by Stark shifting from an external field \cite{Tamarat2006}.
In contrast, silicon vacancies have inversion symmetry that protects them from external fields, thereby reducing inhomogeneity but makes spectral control via Stark shifts challenging \cite{Rogers2014}. However each silicon vacancy can be addressed with a tunable off-resonant laser to obtain spectrally overlapping Raman transitions, as has been used to demonstrate $1$D superradiance \cite{Sipahigil2016}.

	For either nitrogen- or silicon- vacancy centers, the pair can be excited symmetrically by a resonant pulse with bandwidth much less than $\gamma_{\imath\imath}$ and propagating perpendicular to $\bm{r}_{\imath\jmath}$. Superradiant effects can be quantified by $\omega_{\imath\jmath}(\tilde{r}_{\imath\jmath}, \bm{\mu}_{\imath},\bm{\mu}_{\jmath})$ and $\gamma_{\imath\jmath}(\tilde{r}_{\imath\jmath}, \bm{\mu}_{\imath},\bm{\mu}_{\jmath})$ through time-resolved photoluminescence measurements as outlined in Fig.~\ref{fig:schematic}(b).
	
	In conclusion, we present a unified solution for collective spontaneous emission, for electromagnetic field confined to dimension $d\in[1,2,3]$, with arbitrary dipole orientation and separation. We explain the scaling behavior of cooperative effects for systems much larger or smaller than the resonance wavelength. Furthermore we suggest a potential implementation scheme using vacancy centers in diamond to explore the effects in $2$D.
	
	We thank Paul Barclay for valuable discussions. TAH and HD acknowledge support from NSF Grant DMR-1120923 and DMR-1150593, and from AFOSR Grant FA9550-15-1-0240.
	BCS acknowledges support from NSERC, Alberta Innovates, China's 1000 Talent Plan, the Institute for Quantum Information and Matter, an NSF Physics Frontiers Center (NSF Grant PHY-1125565), and the support of the Gordon and Betty Moore Foundation (GBMF-2644).

	\bibliography{superrad}
\end{document}